\newcommand{\smat}[1]{\mathcal{S}\left(#1\right)}
\newcommand{\smateq}[1]{\mathcal{S}_N\left(#1\right)}
\newcommand{\thetadiff}{\left(\theta_{i}-\theta_{j}\right)}
\newcommand{\thetadiffeq}{\left(\frac{2\pi}{N}\left(i-j\right)\right)}
\newcommand{\thetadiffeqf}[1]{2\pi\left(#1\right)/N}
\newcommand{\lindex}{l=1\left(l\neq k\right)}
\newcommand{\jindex}{j=1\left(j\neq k\right)}
\newtheorem{theorem}{Theorem}
\newtheorem{lemma}{Lemma}
\numberwithin{lemma}{theorem}
\newtheorem{corollary}{Corollary}
\numberwithin{corollary}{lemma}
\newenvironment{proof}[1][Proof.]{\begin{trivlist}
\item[\hskip \labelsep {\bfseries #1}]}{\end{trivlist}}
\newcommand{\qed}{\nobreak \ifvmode \relax \else
      \ifdim\lastskip<1.5em \hskip-\lastskip
      \hskip1.5em plus0em minus0.5em \fi \nobreak
      \vrule height0.4em width0.5em depth0.25em\fi}
\begin{document}

\title{Angular Normal Modes of a Circular Coulomb Cluster}

\author{L.W. Lupinski}
\author{M.J. Madsen}

\affiliation{Department of Physics, Wabash College, Crawfordsville, IN  47933}

\date{\today}

\begin{abstract}
We investigate the angular normal modes for small oscillations about an equilibrium of a single-component coulomb cluster confined by a radially symmetric external potential to a circle.  The dynamical matrix for this system is a Laplacian symmetrically circulant matrix and this result leads to an analytic solution for the eigenfrequencies of the angular normal modes.  We also show the limiting dependence of the largest eigenfrequency for large numbers of particles.
\end{abstract}

\maketitle

\section{Introduction}

We present in this paper an analytic solution for finding the angular normal mode frequencies of small oscillations about an equilibrium position for a system of interacting charged particles confined by a radially symmetric external potential.  Experimental applications of this analysis might include the following types of systems: laser cooled ions in a Penning trap\cite{gilbert}, ions in a cylindrically symmetric mass spectrometer\cite{schatz,lammert,austin}, clusters of electrons on the surface of liquid helium\cite{leiderer}, and atoms confined in a chemical ring \cite{wilson}.  We limit our investigation to two-dimensions, similar to previous work with these types of Coulomb clusters \cite{schweigert}, though three-dimensional systems have also been investigated using molecular dynamics simulations \cite{rahman}.  We also limit our investigation to a one-component plasma where particles have identical mass and charge.  Other situations where the charges are not uniform have been investigated elsewhere.\cite{barkby}

Our work can be stated in the following theorem:
\begin{theorem}
For $N$ charged particles, confined by a radially symmetric potential such that their equilibrium positions lie on a circle, the eigenfrequencies of the normal modes of oscillation for small perturbations about the equilibrium positions can be written as
$$
\omega_h^2=\displaystyle\sum_{r=2}^N\left[\smateq{1,r}\left(1-\cos\left(\frac{2\pi(h-1)(r-1)}{N}\right)\right)\right] ,
$$
where $\smateq{i,j}$  is 
$$
\smateq{i,j}=\frac{3+\cos\thetadiffeq}{4\sqrt{2-2\cos\thetadiffeq}}\frac{1}{\sin^2\left(\frac{\pi}{N}(i-j)\right)}.
$$
\label{theorem1}
\end{theorem}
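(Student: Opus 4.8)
The plan is to treat the $N$ particles as angular degrees of freedom $\theta_i$ pinned to a circle of equilibrium radius $R$, write the Coulomb energy as a sum over pairs, expand it to second order about the evenly spaced configuration, recognize the resulting dynamical matrix as a symmetric circulant Laplacian, and read off its eigenvalues from the discrete Fourier transform. Since the external potential is radially symmetric it is constant under purely tangential displacements, so only the pairwise Coulomb interaction governs the angular dynamics. With the chord length $d_{ij}=R\sqrt{2-2\cos(\theta_i-\theta_j)}$, the potential energy is $U=\sum_{i<j}V(\theta_i-\theta_j)$ with $V(\phi)\propto(2-2\cos\phi)^{-1/2}$, an even function of $\phi$. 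By the reflection symmetry of this interaction the configuration $\theta_i^{(0)}=2\pi i/N$ is a critical point, so the linear terms in the Taylor expansion vanish and the quadratic term alone defines the dynamical matrix.

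Next I would extract the Hessian. Writing $\theta_i=\theta_i^{(0)}+\delta_i$, the second derivatives of $U$ give the off-diagonal entry $-V''(\theta_i^{(0)}-\theta_j^{(0)})$ for $i\neq j$ and the diagonal entry $\sum_{l\neq i}V''(\theta_i^{(0)}-\theta_l^{(0)})$, which is exactly the Laplacian (vanishing row-sum) structure promised in the abstract. Because $\theta_i^{(0)}-\theta_j^{(0)}=\tfrac{2\pi}{N}(i-j)$ depends only on $i-j$ modulo $N$, and $V''$ is even, the matrix is simultaneously symmetric and circulant.

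The real work lies in the algebraic evaluation of $V''$ and in confirming that the angular sector decouples from the radial one. Differentiating $(2-2\cos\phi)^{-1/2}$ twice produces a factor $(2-2\cos\phi)^{-5/2}\bigl(3-2\cos\phi-\cos^2\phi\bigr)$; applying the half-angle identity $2-2\cos\phi=4\sin^2(\phi/2)$ together with the factorization $3-2\cos\phi-\cos^2\phi=(3+\cos\phi)(1-\cos\phi)$ collapses this into the form defining $\smateq{i,j}$, up to the overall dimensionful prefactor and numerical normalization fixed by the chosen units of $\omega^2$. This trigonometric reduction, and the check that tangential displacements do not mix into the radial block of the full two-dimensional Hessian, are the steps I expect to be delicate; everything after is standard linear algebra.

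Finally I would diagonalize. A symmetric circulant matrix is diagonalized by the Fourier vectors $v^{(h)}_k=\exp\!\left(\tfrac{2\pi\mathrm{i}(h-1)(k-1)}{N}\right)$ (with $\mathrm{i}=\sqrt{-1}$), so summing the first row against the $h$-th mode and using the vanishing row-sum gives $\omega_h^2=\sum_{r=2}^N\smateq{1,r}\left(1-\exp\!\left(\tfrac{2\pi\mathrm{i}(h-1)(r-1)}{N}\right)\right)$. Invoking the reflection symmetry $\smateq{1,r}=\smateq{1,N+2-r}$ of the circulant weights, the imaginary parts cancel in conjugate pairs and each exponential may be replaced by its cosine, which yields exactly the claimed expression for $\omega_h^2$ and completes the proof.
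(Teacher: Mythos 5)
Your proposal is correct and follows essentially the same route as the paper: restrict to tangential displacements on the circle, compute the second derivatives of the pairwise Coulomb energy to obtain a symmetric circulant Laplacian with edge weights $\smateq{i,j}$ (your factorization $3-2\cos\phi-\cos^2\phi=(3+\cos\phi)(1-\cos\phi)$ reproduces the paper's trigonometric reduction), and diagonalize it. The only cosmetic difference is that you diagonalize directly with the discrete Fourier vectors and use the reflection symmetry $\smateq{1,r}=\smateq{1,N+2-r}$ (the paper's Lemma~\ref{Lemma2_1}) to realify the spectrum, whereas the paper invokes the Guti\'{e}rrez-Guti\'{e}rrez eigenvalue formula for symmetric circulant matrices.
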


In Section II we present the physical characteristics of the problem including our small-perturbation approximation in order to develop the conditions of Theorem~\ref{theorem1}.  We develop the mass-weighted dynamical (Hessian) matrix based on the inter-particle Coulomb interaction using techniques which are typically used in normal mode analysis in physical chemistry\cite{wilson}.  We utilize concepts from graph theory to construct an edge-weighted Laplacian matrix whose weights correspond to the first-order spring constant for small oscillations about the equilibrium position.  In Section III we apply the symmetry constraints on the equilibrium conditions and develop the properties of the edge weights $\smateq{i,j}$ for this physical situation.  We use these properties to prove that our dynamical matrix is a Laplacian matrix that is also symmetrically circulant.  In Section IV we build on results from Guti\'{e}rrez-Guti\'{e}rrez\cite{gutierrez} to find the eigenvalues for a weighted Laplacian symmetrically circulant matrix.  Finally, in Section V, we comment on the properties of the eigenvalues, the largest and smallest frequencies, and the limiting behavior as $N$ becomes large.
Since we limit the scope of this paper to the situation where the particle filling is low enough that there is only a single ring of particles with radius $R$ in the equilibrium state, when we talk about larger numbers of particles, we assume that the trapping potential has been appropriately scaled to maintain this condition.

\section{Physical Model}
This paper will be focused on a system of identically charged particles confined to a two dimensional circle of radius $R$, as shown in Fig.~\ref{fig1}, by an external potential.  
\begin{figure}[htbp]  
\includegraphics[width=8cm,keepaspectratio]{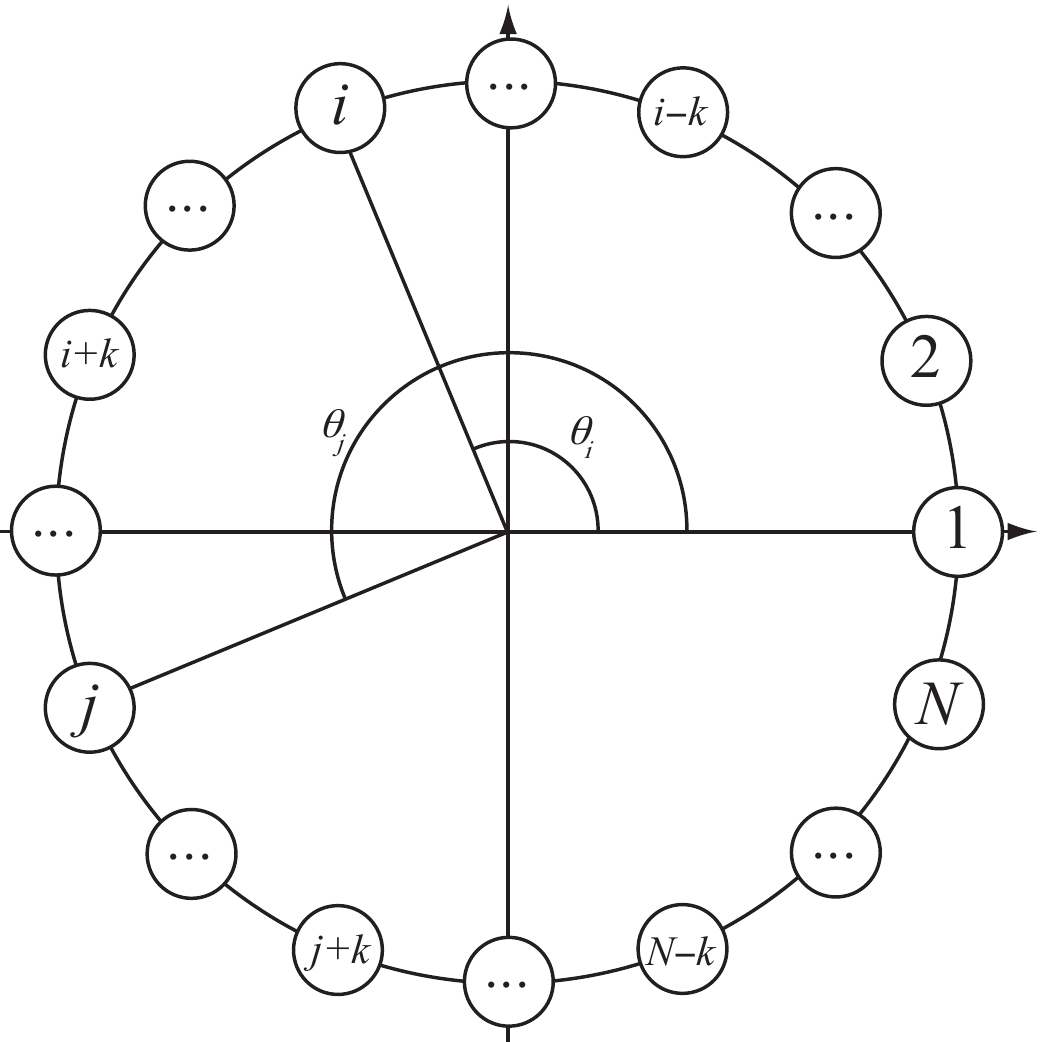}  
\caption{The equilibrium positions of $N$ particles confined by an external potential to a circle of radius $R$.  We identify each particle with an index $i$ and define the angle $\theta_i$ from the equilibrium position from the first particle.}
\label{fig1}
\end{figure}
In order to find the normal mode frequencies for small oscillations about this equilibrium position, we construct the dynamical (Hessian) matrix $\mathbf{H}$ based on the interaction Hamiltonian $H$ of this system.\cite{wilson}  The Hamiltonian has two pieces: the Coulombic repulsion between pairs of particles and a radially symmetric external potential.
\begin{equation}
H=\displaystyle\sum_{j=1}^N \displaystyle\sum_{i>j}^N  \frac{q^2}{4\pi\epsilon_0|\vec{r}_i-\vec{r}_j|} + \displaystyle\sum_{i=1}^N V(\vec{r}_i).
\end{equation}
It is natural to use polar coordinates, $\vec{r_i}=(x_i,y_i)=r_i(\cos\theta_i,\sin\theta_i)$ because of the radially symmetric potential.  The Hamiltonian in polar coordinates is thus
\begin{equation}
H=\frac{q^2}{4\pi\epsilon_0}\displaystyle\sum_{j=1}^N \displaystyle\sum_{i>j}^N  \frac{1}{\sqrt{r_i^2+r_j^2-2r_ir_j\cos\thetadiff}} + \displaystyle\sum_{i=1}^N V(r_i).
\end{equation}
We now assume that the radially symmetric external potential confines the particles to a circle with the minimum at some radius $R$.  Therefore $r_i=r_j=R$, and we obtain,
\begin{equation}
H=\frac{q^2}{4\pi\epsilon_0}\frac{1}{2R}\displaystyle\sum_{j=1}^N \displaystyle\sum_{i>j}^N 
\left(\frac{1-\cos\thetadiff}{2}\right)^{-\frac{1}{2}} +\displaystyle\sum_{i=1}^N V(R).
\end{equation}
Using
\begin{equation} 
\displaystyle\sum_{i=1}^N V(R)=V(R)+V(R)+...+V(R)=N\left(V(R)\right),
\end{equation}
the interaction Hamiltonian is thus
\begin{equation}
H=\frac{q^2}{4\pi\epsilon_0}\frac1{2R}\displaystyle\sum_{j=1}^N \displaystyle\sum_{i>j}^N   \left(\frac{1-\cos\thetadiff}{2}\right)^{-\frac12} + N\left(V(R)\right).
\end{equation}
We can now construct the mass-weighted dynamical matrix consisting of the second partial derivatives of the Hamiltonian, evaluated at the equilibrium positions of the particles.  This model is only valid for small perturbations about the equilibrium position and is only a first-order approximation.  The mass weight makes the eigenvalues of the dynamical matrix the normal mode frequencies.  Since we are working in polar coordinates, we rewrite the partial derivatives $\partial x_i\rightarrow R\partial\theta_i$, and the elements $H_{kl}$ of the dynamical matrix are
\begin{eqnarray}
H_{kl} & = & \left. \frac{1}{MR^2}\frac{\partial^2H}{\partial\theta_k\partial\theta_l}\right|_\textrm{equil}\nonumber  \\
{} & = & \frac{q^2}{4\pi\epsilon_0(2R^3M)}\frac{\partial^2}{\partial{\theta_k}\partial{\theta_l}}\left[\displaystyle\sum_{j=1}^N \displaystyle\sum_{i>j}^N   \left(\frac{1-\cos\thetadiff}{2}\right)^{-\frac12} +N V(R)\right]_\textrm{equil}
\end{eqnarray}
where $k,l$ are integers and run from $1,...,N$.  We scale the dynamical matrix  by the natural frequency for this system, $\omega_0^2=q^2/4\pi\epsilon_0(2R^3M)$ which is equivalent to scaling the Hamiltonian by the harmonic oscillator frequency of the confining potential.  Therefore all eigenfrequencies are scaled in these units as well.  We also note that the $N V(R)$ term is only dependent on $R$ and so does not contribute after the partial differentiation.  The scaled matrix elements are thus
\begin{equation}
H_{kl}=\frac{\partial^2}{\partial{\theta_k}\partial{\theta_l}}\left[\displaystyle\sum_{j=1}^N \displaystyle\sum_{i>j}^N   \left(\frac{1-\cos\thetadiff}{2}\right)^{-\frac12} \right] =\displaystyle\sum_{j=1}^N \displaystyle\sum_{i>j}^N  \frac{\partial^2}{\partial{\theta_k}\partial{\theta_l}}  \left(\frac{1-\cos\thetadiff}{2}\right)^{-\frac12}.
\label{hesspartial1}
\end{equation}
We can now simplify the partial derivatives using the Kronecker delta notation:
\begin{equation}
\frac{\partial}{\partial\theta_l}   \left(\frac{1-\cos\thetadiff}{2}\right)^{-\frac12} = \frac12 \frac{\cot\left(\frac{\theta_i-\theta_j}{2}\right)}{\sqrt{\frac{1-\cos\thetadiff}{2}}} (\delta_{lj}-\delta_{li}) .
\end{equation}
We then take the second derivative, giving 
\begin{equation}
\frac{\partial}{\partial\theta_k}\left(\frac12 \frac{\cot\left(\frac{\theta_i-\theta_j}{2}\right)}{\sqrt{\frac{1-\cos\thetadiff}{2}}} (\delta_{lj}-\delta_{li}) \right) = \frac{3+\cos\thetadiff}{4\sqrt{2-2\cos\thetadiff}}\frac{1}{\sin^2\left(\frac{\theta_i-\theta_j}{2}\right)}(\delta_{kj}-\delta_{ki})(\delta_{lj}-\delta_{li}).
\end{equation}
Substituting this result into (\ref{hesspartial1}) we obtain,
\begin{equation}
H_{kl} =  \displaystyle\sum_{j=1}^N \displaystyle\sum_{i>j}^N \left[\frac{3+\cos\thetadiff}{4\sqrt{2-2\cos\thetadiff}}\frac{1}{\sin^2\left(\frac{\theta_i-\theta_j}{2}\right)}(\delta_{lj}\delta_{kj}-\delta_{lj}\delta_{ki}-\delta_{li}\delta_{kj}+\delta_{li}\delta_{ki})\right]_\textrm{equil}
\label{kdeltaform}.
\end{equation}

In order to simplify our notation we will define the function $\mathcal{S}$ such that,
$$
\smat{i,j} = \frac{3+\cos\thetadiff}{4\sqrt{2-2\cos\thetadiff}}\frac{1}{\sin^2\left(\frac{\theta_i-\theta_j}{2}\right)}.
$$
In graph theory $\smat{i,j}$ is the weighted edge function of the inter-connected particle interaction graph.  It is important to note that $\smat{i,j}$ is an even function which leads to the following lemma:

\begin{lemma}
The edge weight function $\smat{i,j}$ has the property that $\smat{i,j} = \smat{j,i}$.
\label{symmetric}
\end{lemma}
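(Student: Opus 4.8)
The plan is to exploit the fact that $\smat{i,j}$ depends on the two indices only through the angular difference $\theta_i - \theta_j$, and to show that this dependence is even. Writing $\alpha = \theta_i - \theta_j$, the interchange $i \leftrightarrow j$ replaces $\alpha$ by $-\alpha$, so the claim $\smat{i,j} = \smat{j,i}$ is equivalent to the statement that $\smat{i,j}$ is invariant under $\alpha \mapsto -\alpha$; that is, that $\smat{i,j}$ is an even function of the angular difference.

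First I would isolate the three factors in the definition of $\smat{i,j}$: the numerator $3 + \cos\alpha$, the radical $4\sqrt{2 - 2\cos\alpha}$ in the denominator, and the factor $1/\sin^2(\alpha/2)$. The first two depend on $\alpha$ only through $\cos\alpha$, and since cosine is even, $\cos(-\alpha) = \cos\alpha$, both of these are manifestly invariant under $\alpha \mapsto -\alpha$.

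The remaining factor is the one to watch. Although $\sin(\alpha/2)$ is odd, so that $\sin(-\alpha/2) = -\sin(\alpha/2)$, the squaring restores evenness: $\sin^2(-\alpha/2) = (-\sin(\alpha/2))^2 = \sin^2(\alpha/2)$, and hence $1/\sin^2(\alpha/2)$ is also invariant. Since each of the three factors is even in $\alpha$, their product is even as well, and therefore $\smat{i,j} = \smat{j,i}$.

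The argument has no genuine obstacle; the only point requiring care is that the second derivative carried out just above produced the half-angle sine to the \emph{second} power rather than the first. An odd power would have rendered the factor odd in $\alpha$ and broken the symmetry, so it is precisely the even exponent that guarantees the result.
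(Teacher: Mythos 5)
Your proof is correct and follows essentially the same route as the paper: both arguments reduce the claim to the evenness of $\cos(\theta_i-\theta_j)$ and of $\sin^2\left(\frac{\theta_i-\theta_j}{2}\right)$ under the sign flip induced by swapping $i$ and $j$. Your version is slightly more explicit about why the squared sine is even despite $\sin$ being odd, but the substance is identical.
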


\begin{proof}
For the cosine components of $\smat{i,j}$,
$\cos\left(\theta_i-\theta_j\right)$ is equivalent to $\cos\left(-\left(\theta_j-\theta_i\right)\right)$ and because cosine is an even function is equivalent to $\cos\left(\theta_j-\theta_i\right)$. 
By symmetry this holds for the $\sin^2$ component, because it is also even.  Therefore $\smat{i,j} = \smat{j,i}$.
$\qed$
\end{proof}
This also means that $H_{kl}=H_{lk}$, which agrees with the property that Hessian matrices are symmetric.  This result is a consequence of Newton's third law since the interactions between two particles are equal.

We will now define the edge weight function, $\smateq{i,j}$, evaluated at the equilibrium positions of the $N$ particles on the circle.  The particles are equally distributed around the circle such that the difference in the angular position between any two particles is a multiple of $2\pi/N$.  Therefore an arbitrary particle identified by the index $i$ has an equilibrium angular position of $\theta_i=2\pi i/N+\theta_0$ where $\theta_0$ is the arbitrary starting point on the circle (given by the angle of the first particle).  Thus the difference in angular position between particles $i$ and $j$ is  $\left(\theta_i-\theta_j\right)=\thetadiffeqf{i-j}$.

Now that the equilibrium geometry is established we can define, the edge weight for N particles
\begin{equation}
\smateq{i,j}=\frac{3+\cos\thetadiffeq}{4\sqrt{2-2\cos\thetadiffeq}}\frac{1}{\sin^2\left(\frac{\pi}{N}(i-j)\right)}.
\end{equation}
An illustration of the first five interconnected particle graphs, connected by edges with the weight of $\smateq{i,j}$, is shown in Fig.~\ref{fig2}.
\begin{figure}[htbp]  
\includegraphics[width=6in,keepaspectratio]{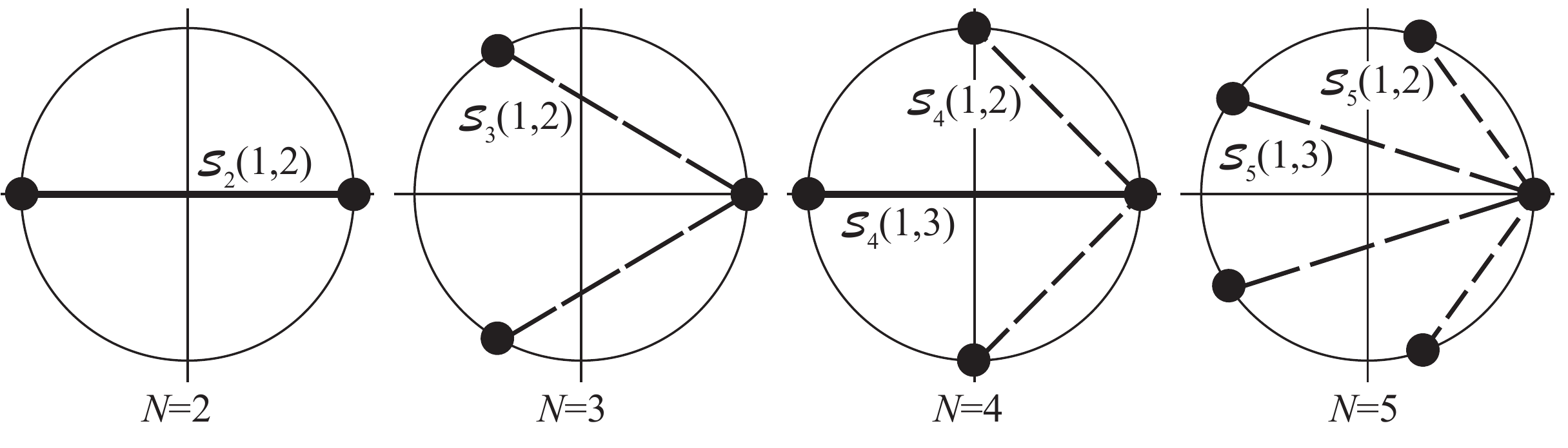}  
\caption{The first five interconnected particle graphs showing one set of the weighted edges,  $\smateq{1,j}$ where $j=1,...,N$.}
\label{fig2}
\end{figure}
Using $\smateq{i,j}$ we can further simplify (\ref{kdeltaform}).
Beginning with the $k \neq l$ case and evaluating the Kronecker deltas, we are left with only two possibilities
\begin{equation}
H_{kl}=
\begin{cases} 
-\smateq{k,l} & \mbox{if } \mbox{$k > l$} \\ 
-\smateq{l,k} & \mbox{if } \mbox{$k < l$} .
\end{cases}
\end{equation}
We can then combine this statement using Lemma \ref{symmetric}.  Therefore
the statement above can be written as
\begin{equation}
H_{kl} = -\smateq{k,l}   \mbox{for all $k \neq l$}.
\label{kneql}
\end{equation}
Now we will look at the $k = l$ case, from $\left(\ref{kdeltaform}\right)$,
\begin{equation}
H_{kk}=
\begin{cases} 
\displaystyle\sum_{j=1}^N \smateq{k,j} = \displaystyle\sum_{j=1}^{k-1} \smat{k,j}  & \mbox{if } \mbox{$k > j$} \\ 
 \displaystyle\sum_{i > k}^N \smateq{i,k} =  \displaystyle\sum_{i=k+1}^N \smat{i,k} & \mbox{if } \mbox{$k < i$} .
\end{cases}
\end{equation}
Recognizing the overlap between the two summations, and again using Lemma \ref{symmetric}, we combine them to give
\begin{equation}
H_{kk} = \displaystyle\sum_{\jindex}^N\smateq{k,j} 
\label{keql}
\end{equation}
where the $\left(j \neq k\right)$ notation means that the term $j=k$ is skipped when performing the sum.  Now (\ref{kneql}) and (\ref{keql}) describe each entry of the matrix.

\section{Properties of $\smateq{i,j}$}
Before we can prove that the weighted Hessian matrix generated by functions of $\smateq{i,j}$ is a Laplacian symmetric circulant matrix, we will first prove several properties of $\smateq{i,j}$.  We already know from Lemma~\ref{symmetric}, that $\smateq{i,j}=\smateq{j,i}$ and that $\smateq{i,j} $ is an even function.  The new properties are contained in the following lemmas and their corollaries,  with a graphical representation of the properties displayed in these proofs shown in Fig.~\ref{fig3}.

\begin{figure}[htbp]  
\includegraphics[width=8cm,keepaspectratio]{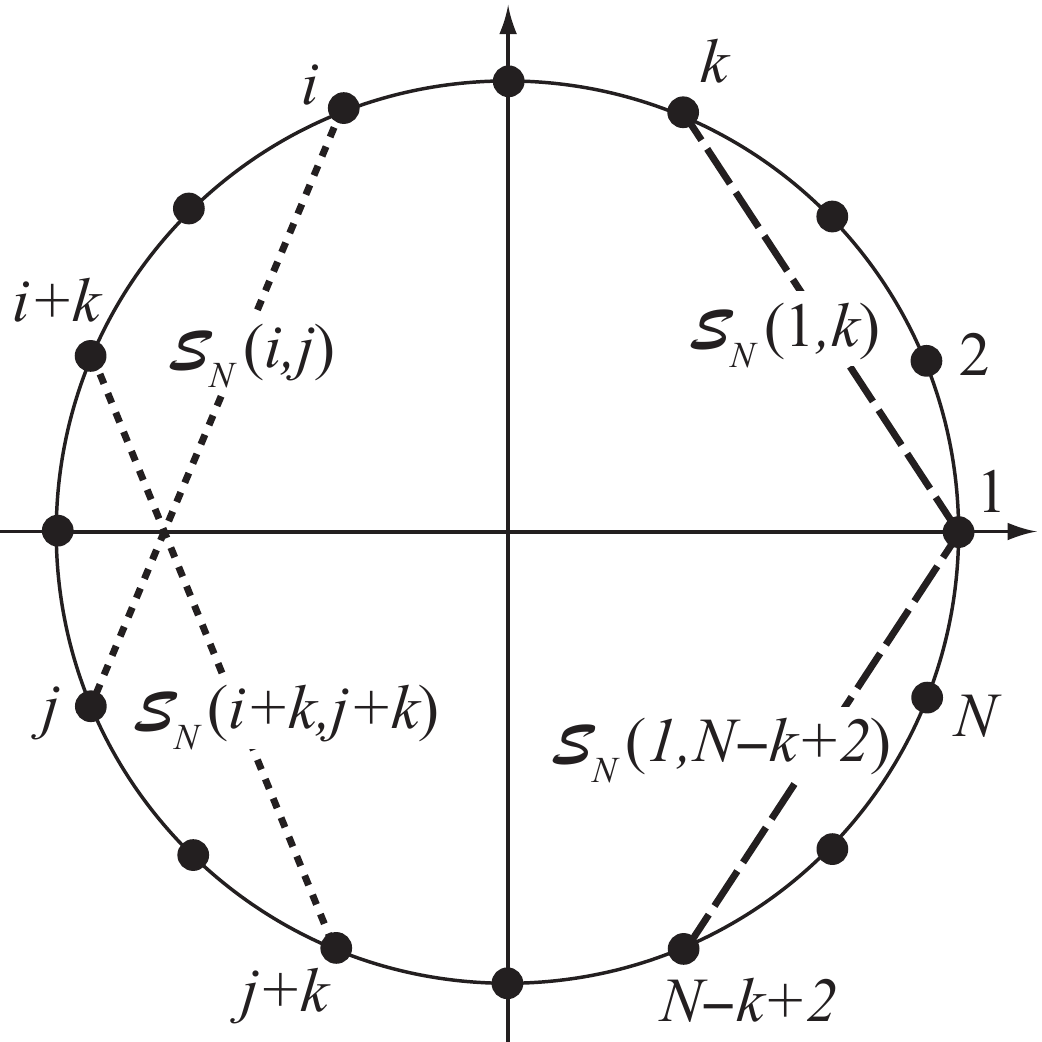}  
\caption{This diagram explicitly demonstrates Lemmas \ref{Toeplitz} and \ref{Lemma2_1}.  The setup from which the other Lemmas can be graphically interpreted is also shown.}  
\label{fig3}
\end{figure}

\begin{lemma}
\label{Toeplitz}
The edge weight function  $\smateq{i,j}$ has the property $\smateq{i,j} = \smateq{i+k,j+k}$.
\end{lemma}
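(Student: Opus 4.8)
The plan is to exploit the fact that the explicit formula for $\smateq{i,j}$ depends on the two indices only through their difference $i-j$. I would begin by pointing out that in the definition the indices enter exclusively through $\cos\thetadiffeq$ (which appears both in the numerator and inside the radical in the denominator) and through $\sin^2\left(\frac{\pi}{N}(i-j)\right)$, each of which is manifestly a function of $i-j$ alone. Having isolated this structural observation, the identity becomes almost immediate.

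Next I would perform the simultaneous shift $i\rightarrow i+k$ and $j\rightarrow j+k$ and compute the shifted difference $(i+k)-(j+k)=i-j$. Because this difference is invariant, every factor in the formula for $\smateq{i+k,j+k}$ — the cosine in the numerator, the cosine under the square root, and the $\sin^2$ factor — coincides termwise with the corresponding factor in $\smateq{i,j}$. A direct comparison of the two expressions then yields $\smateq{i+k,j+k}=\smateq{i,j}$, completing the argument.

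The only point requiring any care, and the closest thing to an obstacle, is consistency when the shifted indices run past $N$, since the particles live on a circle and the indices should be read modulo $N$. I would dispose of this by noting that the trigonometric factors are periodic in the integer $i-j$ with period $N$: the argument of the cosine, $2\pi(i-j)/N$, changes by a multiple of $2\pi$ under $i-j\mapsto i-j+N$, while the argument of the $\sin^2$ factor changes by $\pi$, leaving $\sin^2$ unchanged. Hence a shift that wraps around the ring reproduces the same edge weight, so the identity is well defined and consistent with the circulant structure established later. Beyond this periodicity check I do not anticipate any genuine difficulty; the substance of the lemma is precisely the observation that $\smateq{i,j}$ is a function of $i-j$ alone.
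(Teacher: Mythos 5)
Your proof is correct and takes essentially the same route as the paper: both rest on the single observation that every trigonometric argument in $\smateq{i,j}$ depends on the indices only through the difference $i-j$, which is invariant under the simultaneous shift $i\rightarrow i+k$, $j\rightarrow j+k$. Your closing periodicity check is a reasonable precaution but is not part of the paper's proof of this lemma; the paper isolates that wrap-around behavior in a separate lemma and corollary ($\smateq{0,j}=\smateq{N,j}$ and $\smateq{i,k}=\smateq{i,k\pm N}$).
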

\begin{proof}
The angle of the trig functions corresponding to $\smateq{i,j}$ is $\thetadiffeqf{i-j}$.  Adding and subtracting a constant $k$ from the argument gives us  $\thetadiffeqf{\left(i+k\right)-\left(j+k\right)}$, which is the angle corresponding to $\smateq{i+k,j+k}$.  Therefore $\smateq{i,j} = \smateq{i+k,j+k}$. $\qed$
\end{proof}

This result is a consequence that the interaction between any two particles separated by the same distance  is the same (as shown in Fig.~\ref{fig3}) and corresponds to the discrete rotational symmetry of the physical system.

\begin{corollary}
\label{for_backward}
The edge weight function $\smateq{i,j}$ has the property $\smateq{i,i+k}=\smateq{i,i-k}$.
\end{corollary}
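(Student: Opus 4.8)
The plan is to derive this directly by chaining the two results already established for $\smateq{i,j}$, namely the reflection symmetry of Lemma~\ref{symmetric} and the translation (Toeplitz) invariance of Lemma~\ref{Toeplitz}. Neither a return to the explicit trigonometric definition nor any new computation should be required; the statement is a purely formal consequence of the two symmetries applied in sequence, which is exactly why it is posed as a corollary rather than a standalone lemma.

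First I would apply Lemma~\ref{symmetric} to the left-hand side to swap the two indices, writing $\smateq{i,i+k}=\smateq{i+k,i}$. Next I would invoke Lemma~\ref{Toeplitz} with its shift parameter chosen to be $-k$, which simultaneously decrements both indices and gives $\smateq{i+k,i}=\smateq{(i+k)-k,\,i-k}=\smateq{i,i-k}$. Composing these two equalities yields $\smateq{i,i+k}=\smateq{i,i-k}$, which is the claim. I would take care here to keep the corollary's offset $k$ notationally distinct from the (universally quantified) shift variable appearing in the statement of Lemma~\ref{Toeplitz}, since reusing the same letter is the one place a careless reader could get confused.

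There is essentially no analytic obstacle, since the result is a statement about index bookkeeping rather than a genuine estimate or inequality; the main point that warrants attention is simply the \emph{sign} of the shift invoked in Lemma~\ref{Toeplitz} — one must shift by $-k$, not $+k$, so that the larger index $i+k$ is the one reduced back to $i$. As a consistency check I would also note that the identity follows straight from the definition of $\smateq{i,j}$: the arguments of the trigonometric factors for $\smateq{i,i+k}$ and $\smateq{i,i-k}$ are $\thetadiffeqf{-k}$ and $\thetadiffeqf{k}$ respectively, and these produce identical values because both $\cos$ and $\sin^2$ are even. This second route recovers the same conclusion and confirms the index manipulation, and it also makes clear that the identity holds as an equality of the underlying analytic expression regardless of whether $i\pm k$ happens to fall outside the range $1,\dots,N$.
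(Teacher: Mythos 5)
Your proof is correct and follows essentially the same route as the paper's: apply Lemma~\ref{symmetric} to get $\smateq{i,i+k}=\smateq{i+k,i}$, then shift both indices by $-k$ via Lemma~\ref{Toeplitz} to obtain $\smateq{i,i-k}$. Your explicit attention to the sign of the shift is a small improvement in clarity over the paper's phrasing (``add an $i-k$ to both sides''), but the argument is the same.
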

\begin{proof}
By Lemma \ref{symmetric} $\smateq{i,i+k}$ is equivalent to $\smateq{i+k,i}$.  Using Lemma \ref{Toeplitz}, we can add an $i-k$ to both sides, which yields $\smateq{i,i-k}$.  Therefore $\smateq{i,i+k}=\smateq{i,i-k}$. $\qed$
\end{proof}

\begin{lemma}
\label{piper2}
The edge weight function $\smateq{i,j}$ has the property $\smateq{0,j}=\smateq{N,j}$.
\end{lemma}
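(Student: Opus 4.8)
The plan is to argue directly from the angle appearing in the trigonometric functions, exactly as in the proof of Lemma~\ref{Toeplitz}. First I would substitute $i=0$ and $i=N$ into the definition of $\smateq{i,j}$ and record the two relevant arguments. Setting $i=0$, the argument of the cosine terms is $-2\pi j/N$, while setting $i=N$ it is $2\pi(N-j)/N = 2\pi - 2\pi j/N$; these two arguments differ by exactly $2\pi$.

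Next I would observe that the two trigonometric building blocks of $\smateq{i,j}$ respond to this shift in a compatible way. The cosine factors are $2\pi$-periodic, so the shift by $2\pi$ leaves $3+\cos(\cdot)$ in the numerator and $\sqrt{2-2\cos(\cdot)}$ in the denominator unchanged. For the $\sin^2$ factor one must be slightly more careful: its argument carries the half-angle prefactor $\pi/N$, so passing from $i=0$ to $i=N$ shifts that argument by only $\pi$ rather than $2\pi$. The key point is that $\sin^2$ has period $\pi$ (since $\sin(x+\pi)=-\sin x$), so that $\sin^2\left(\frac{\pi}{N}(N-j)\right)=\sin^2\left(\frac{\pi}{N}(-j)\right)$ as well.

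Combining these observations, every factor of $\smateq{N,j}$ equals the corresponding factor of $\smateq{0,j}$, giving $\smateq{0,j}=\smateq{N,j}$. I expect the only real subtlety to be the mismatch between the $2\pi$-periodicity of the cosine terms and the $\pi$-periodicity needed for the $\sin^2$ term; once it is noted that the half-angle prefactor makes the relevant shift equal to exactly one full period of each function separately, the result is immediate. Alternatively, the same conclusion follows by combining Lemma~\ref{Toeplitz} (with shift $k=N$) with the fact that $\smateq{i,j}$ is $N$-periodic in its second index, though establishing that periodicity requires essentially the same angle computation.
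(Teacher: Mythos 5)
Your proposal is correct and follows essentially the same route as the paper: compute the angle for $i=0$ and for $i=N$, observe they differ by $2\pi$, and invoke periodicity of the trigonometric components. You are in fact slightly more careful than the paper, which glosses over the point that the $\sin^2$ factor carries the half-angle argument $\frac{\pi}{N}(i-j)$ and therefore shifts by only $\pi$ — your observation that $\sin^2$ has period $\pi$ closes that small gap cleanly.
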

\begin{proof}
The angle of the trig functions corresponding to $\smateq{0,j}$ is $-\thetadiffeqf{j}$.  Also the angle corresponding to $\smateq{N,j}$ is $-\thetadiffeqf{N-j}=2\pi-\left(2\pi{j}/N\right)$.  Therefore because the trigonometric components of $\smateq{i,j}$ are $2\pi$ periodic, $\smateq{0,j}=\smateq{N,j}$.  $\qed$
\end{proof}
There is no $0^\mathrm{th}$ particle on the circle; the physical interpretation of the Lemma is that one complete journey around the circle is achieved by a rotation of $N$ particles.  For instance, starting at particle 1 on Fig.~\ref{fig3} and moving around N particles forward, brings one back to particle 1.  
\begin{corollary}
The edge weight function $\smateq{i,j}$ has the property $\smateq{i,k}=\smateq{i,k\pm{N}}$.
\label{nzero}
\end{corollary}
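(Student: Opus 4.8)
The plan is to recognize this corollary as the same $2\pi$-periodicity already established in Lemma~\ref{piper2}, now applied to a shift of the \emph{second} index by $\pm N$ for an arbitrary first index $i$. Two routes are available: one reduces the claim to Lemma~\ref{piper2} by exploiting the translation invariance of Lemma~\ref{Toeplitz}, while the other argues directly from the periodicity of the trigonometric factors in $\smateq{i,j}$. I would lead with the reduction route, since it makes the dependence on the previously proved lemmas explicit, and use the direct route as a confirmation.

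First I would use Lemma~\ref{Toeplitz} to collapse the two indices onto a single difference, since that lemma says $\smateq{i,j}$ depends only on $i-j$. Translating both indices together by $-k$ gives $\smateq{i,k}=\smateq{i-k,0}$, and translating both indices of $\smateq{i,k\pm N}$ by $-(k\pm N)$ gives $\smateq{i,k\pm N}=\smateq{i-k\mp N,0}$. Writing $j=i-k$, the corollary becomes the single-difference statement $\smateq{j,0}=\smateq{j\mp N,0}$, which by Lemma~\ref{symmetric} is the same as $\smateq{0,j}=\smateq{0,j\mp N}$.

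Next I would invoke Lemma~\ref{piper2}, $\smateq{0,j}=\smateq{N,j}$, and convert its right-hand side back to the reduced form using Lemma~\ref{Toeplitz}: translating both indices of $\smateq{N,j}$ by $-N$ yields $\smateq{N,j}=\smateq{0,j-N}$. Chaining these gives $\smateq{0,j}=\smateq{0,j-N}$, which is the $-N$ case; relabelling $j\to j+N$ then produces $\smateq{0,j+N}=\smateq{0,j}$, so the $+N$ case follows immediately and both signs are handled at once. Unwinding the substitution $j=i-k$ recovers $\smateq{i,k}=\smateq{i,k\pm N}$.

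I do not expect a genuine obstacle here: the content is precisely the $2\pi$-periodicity of the trigonometric components, exactly as in Lemma~\ref{piper2}. Indeed, the direct route confirms this at a glance, since the angle associated with $\smateq{i,k\pm N}$ is $\thetadiffeqf{i-k}\mp 2\pi$, differing from the angle associated with $\smateq{i,k}$ by a full period. The only point requiring care is bookkeeping: keeping the two signs $\pm N$ consistent through the chain of lemma applications, and noting that although the argument of the $\sin^2$ factor shifts by only $\mp\pi$ rather than $\mp 2\pi$, this shift is invisible because $\sin^2$ has period $\pi$.
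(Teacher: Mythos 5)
Your proof is correct and follows essentially the same route as the paper's: reduce via Lemma~\ref{Toeplitz} to a statement comparing $\smateq{\cdot,0}$ with $\smateq{\cdot,N}$ and then invoke Lemma~\ref{piper2} (with Lemma~\ref{symmetric} to swap indices). The only difference is bookkeeping: the paper first relates $k-N$ to $k+N$ via Corollary~\ref{for_backward}, whereas you prove the $-N$ case and obtain the $+N$ case by relabelling, which is a cleaner way to handle the sign.
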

\begin{proof}
The function $\smateq{i,k-N}$ by Corollary \ref{for_backward} is equivalent to $\smateq{i,k+N}$.  Adding $-k$ to both sides, by Lemma \ref{Toeplitz} we obtain $\smateq{i-k,N}$.  Which, by Lemma \ref{piper2}, is equal to  $\smateq{i-k,0}$.  Adding $k$ back to both sides yields $\smateq{i,k}$. $\qed$
\end{proof}

\section{Laplacian Symmetric Circulant Matrix}
Our analytic solution requires that the weighted Hessian $\mathbf{H}$ be a symmetric circulant matrix.  We gain other insights about $\mathbf{H}$ if it is Laplacian.  In this section we prove that the matrix is a Laplacian symmetric circulant matrix.
\begin{theorem}
The matrix $\mathbf{H}$ is a Laplacian matrix, meaning that the sum of each row and column is zero.
\end{theorem}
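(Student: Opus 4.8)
The plan is to read the result straight off the entry-wise description of $\mathbf{H}$ already obtained in (\ref{kneql}) and (\ref{keql}), since the defining feature of a Laplacian---each diagonal entry equalling the negated sum of the off-diagonal entries in its row---is essentially how the Hessian was constructed. First I would fix a row index $k$ and split the row sum into its diagonal and off-diagonal parts,
\begin{equation}
\sum_{l=1}^N H_{kl} = H_{kk} + \sum_{\lindex}^N H_{kl}
= \sum_{\jindex}^N \smateq{k,j} - \sum_{\lindex}^N \smateq{k,l},
\end{equation}
using $H_{kk}=\sum_{\jindex}^N\smateq{k,j}$ from (\ref{keql}) and $H_{kl}=-\smateq{k,l}$ from (\ref{kneql}).

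The key observation is that the two remaining sums run over the identical index set (every index in $1,\dots,N$ except $k$) and add up the same function $\smateq{k,\cdot}$; relabeling the dummy index $l\to j$ shows they are equal, so they cancel and the row sum vanishes for every $k$. For the column sums I would invoke the symmetry $H_{kl}=H_{lk}$ noted just after Lemma~\ref{symmetric}: because $\mathbf{H}$ is symmetric, the sum of column $k$ equals the sum of row $k$, which has already been shown to be zero. This disposes of both halves of the statement at once.

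I do not expect a substantive obstacle here---the claim is an immediate consequence of having defined the diagonal entries in (\ref{keql}) to be precisely the sum of the off-diagonal edge weights in each row. The only point that requires genuine care is the index bookkeeping: confirming that the restricted ranges $j\neq k$ and $l\neq k$ in the two sums coincide exactly, so that the cancellation is complete and no stray term survives. Once that is verified, the Laplacian property follows with no further input.
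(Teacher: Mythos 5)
Your proposal is correct and follows essentially the same route as the paper: split the row sum into the diagonal term and the off-diagonal terms, substitute the expressions from (\ref{keql}) and (\ref{kneql}) so the two sums over $j\neq k$ and $l\neq k$ cancel, and then appeal to the symmetry $H_{kl}=H_{lk}$ to handle the column sums. No gaps.
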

\begin{proof}
Because $\mathbf{H}$ is symmetric it suffices to show that the sum of each row is zero.  The sum of each row can be written as 
$$
\displaystyle\sum_{l=1}^N  H_{kl}.
$$
Taking the $l=k$ term out of the sum we obtain 
$$
H_{kk}+\displaystyle\sum_{\lindex}^N  H_{kl}.$$
We can now substitute in the results from Eqns. (\ref{kneql}) and (\ref{keql}); the sum of the row becomes 
$$
\left(\displaystyle\sum_{\lindex}^N\smateq{k,l}\right) + \displaystyle\sum_{\lindex}^N (-\smateq{k,l})=\displaystyle\sum_{\lindex}^N\smateq{k,l} - \displaystyle\sum_{\lindex}^N \smateq{k,l}=0.
$$
Therefore the sum of each row is zero.  Thus the matrix $\mathbf{H}$ is a Laplacian matrix. \qed
\end{proof}
The properties of the eigenvalues of a Laplacian matrix are such that all of the eigenvalues are real and positive and at least one eigenvalue will be zero.  The zero eigenfrequency for the normal mode oscillations corresponds to the common rotational mode of motion of the particles moving in the same direction around the circle.\cite{chung}

\begin{theorem}
The matrix $\bf{H}$ is symmetrically circulant, given by the form\cite{gutierrez}
$$
\mathrm{circ}_N(c_0,c_1, \cdots, c_{N-1})=
\begin{bmatrix}
c_0 & c_{N-1} & c_{N-2} & \cdots & c_1 \\
c_{1} & c_0 & c_{N-1} & \ddots & \vdots \\
c_2 & c_1 & c_0 & \ddots & c_{N-2} \\
\vdots & \ddots & \ddots & \ddots & c_{N-1}\\
c_{N-1} & ... & c_2 & c_1 & c_0
\end{bmatrix}.
$$
A symmetric circulant matrix matrix, $B_n$, has the form that
$$
B_n = 
\begin{cases} 
\mathrm{circ}_N(b_0,b_1,\cdots,b_{\frac{n-1}{2}},b_{\frac{n-1}{2}},\cdots,b_1) & \mbox{if } n\mbox{ is odd} \\ 
\mathrm{circ}_N(b_0,b_1,\cdots,b_{\frac{n}{2}-1},b_{\frac{n}{2}},b_{\frac{n}{2}-1},\cdots,b_1) & \mbox{if } n\mbox{ is even.} \end{cases}
$$
\label{thm3}
\end{theorem}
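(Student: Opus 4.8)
The plan is to exhibit the defining sequence $(c_0, c_1, \ldots, c_{N-1})$ of the circulant form explicitly and then verify that $H_{kl} = c_{(k-l)\bmod N}$ for every pair $(k,l)$, which is exactly the property characterizing the $\mathrm{circ}_N$ form displayed in the statement. First I would read the entries off the first column of $\mathbf{H}$: set $c_0 = H_{11}$ and, using Eqn.~(\ref{kneql}), $c_m = H_{m+1,1} = -\smateq{m+1,1}$ for $m = 1, \ldots, N-1$. The substance of the argument is then to confirm that these same values populate every diagonal band of the matrix.

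For the off-diagonal entries I would first note that $\smateq{k,l}$ is a function of the residue $(k-l)\bmod N$ alone: Lemma~\ref{Toeplitz} shows the edge weight depends only on the index difference $k-l$, and Corollary~\ref{nzero} (together with Lemma~\ref{symmetric}) shows that this dependence is $N$-periodic. Writing $m = (k-l)\bmod N$, a shift of both indices via Lemma~\ref{Toeplitz} identifies $H_{kl} = -\smateq{k,l}$ with $c_m$ in the case $k > l$; the case $k < l$ instead requires reducing $k-l+N$ back into the range $\{1,\ldots,N-1\}$, which is precisely where the $N$-periodicity of Corollary~\ref{nzero} is needed. For the diagonal I would reindex the sum in Eqn.~(\ref{keql}): as $j$ ranges over $\{1,\ldots,N\}\setminus\{k\}$, the residues $(k-j)\bmod N$ range over $\{1,\ldots,N-1\}$ exactly once, so $H_{kk}$ reduces to the same fixed sum of edge weights for every $k$ and the diagonal is therefore constant and equal to $c_0$. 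This establishes that $\mathbf{H}$ is circulant.

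It then remains to upgrade ``circulant'' to ``symmetric circulant,'' that is, to show $c_m = c_{N-m}$. This follows immediately from the symmetry of $\mathbf{H}$ noted after Lemma~\ref{symmetric}: combining $H_{kl}=H_{lk}$ with the circulant identity gives $c_{(k-l)\bmod N}=c_{(l-k)\bmod N}$, and since $(l-k)\bmod N = N-(k-l)\bmod N$ this is the desired relation; alternatively one may apply Corollary~\ref{for_backward} directly to the edge weights. Splitting $c_m=c_{N-m}$ according to whether $N$ is odd or even then reproduces the two cases of the $B_n$ form in the statement. I expect the main obstacle to be bookkeeping rather than anything conceptual: carefully tracking the modular wraparound in the $k<l$ case, and verifying that the diagonal reindexing is a genuine bijection onto the nonzero residues so that the constant-diagonal claim is airtight.
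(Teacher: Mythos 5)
Your proposal is correct and follows essentially the same route as the paper: both arguments rest on Lemma~\ref{Toeplitz} for the dependence on the index difference, Corollary~\ref{nzero} (with Lemma~\ref{symmetric}) for the modular wraparound, and the symmetry of $\mathbf{H}$ for the palindromic structure $c_m=c_{N-m}$. The only difference is packaging --- you verify $H_{kl}=c_{(k-l)\bmod N}$ directly and handle the diagonal by a bijective reindexing onto the nonzero residues, whereas the paper splits the work into a Toeplitz lemma (showing $H_{kk}=H_{(k+1)(k+1)}$ by matching overlapping terms of the two sums) plus a separate first-row symmetry lemma --- but the underlying content is identical.
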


To prove that matrix, $\mathbf{H}$ is a symmetric circulant matrix we need only prove that all of its diagonals are the same and the rows and columns follow the pattern for both odd and even cases.  Physically we know that because all of the particles are identical the diagonals of the matrix should be the same.

\begin{lemma}
The matrix $\mathbf{H}$ is a Toeplitz  matrix, therefore the entries of each left-to-right diagonal is uniform.
\end{lemma}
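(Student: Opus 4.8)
The plan is to verify directly the defining property of a Toeplitz matrix, namely that $H_{kl}$ depends only on the index difference $k-l$, equivalently that $H_{kl}=H_{k+1,l+1}$ wherever both sides are defined. Since the entries of $\mathbf{H}$ are specified by the two cases (\ref{kneql}) and (\ref{keql}), I would handle the off-diagonal entries and the main diagonal separately.

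First, for the off-diagonal entries ($k\neq l$) the argument is immediate. By (\ref{kneql}) we have $H_{kl}=-\smateq{k,l}$, and Lemma~\ref{Toeplitz} with shift constant $1$ gives $\smateq{k,l}=\smateq{k+1,l+1}$, so $H_{kl}=H_{k+1,l+1}$. Iterating Lemma~\ref{Toeplitz} shows more strongly that $\smateq{k,l}$ depends only on $k-l$; hence each off-diagonal left-to-right diagonal is uniform.

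The remaining task, and the real content, is to show that all diagonal entries $H_{kk}$ are equal. Using (\ref{keql}), $H_{kk}=\sum_{\jindex}^N\smateq{k,j}$. The key observation is that Lemma~\ref{Toeplitz} (shifting by $-j$) gives $\smateq{k,j}=\smateq{k-j,0}$, so each summand depends only on the difference $k-j$, and by Corollary~\ref{nzero} the edge weight is periodic in that difference with period $N$. As $j$ runs over $\{1,\dots,N\}\setminus\{k\}$, the differences $k-j$ run over a complete set of nonzero residues modulo $N$; after reindexing by $m=k-j$ and invoking Corollary~\ref{nzero} to fold the range back into $\{1,\dots,N-1\}$, the sum collapses to $\sum_{m=1}^{N-1}\smateq{m,0}$, which is manifestly independent of $k$. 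Thus $H_{kk}$ is the same for every $k$, so the main diagonal is uniform as well.

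I expect the diagonal reindexing to be the only genuine obstacle: one must check that the wraparound supplied by Corollary~\ref{nzero} exactly converts the indices with $j>k$ into the missing residues, so that no value of $m$ is double-counted or omitted and the $N-1$ distinct nonzero residues are covered precisely once. Once both cases are in hand, $\mathbf{H}$ satisfies $H_{kl}=H_{k+1,l+1}$ throughout, which is precisely the statement that $\mathbf{H}$ is Toeplitz with uniform left-to-right diagonals, completing the argument.
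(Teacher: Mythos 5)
Your proposal is correct, and the off-diagonal half coincides with the paper's argument: both invoke Lemma~\ref{Toeplitz} to conclude that $H_{kl}=-\smateq{k,l}$ depends only on $k-l$. Where you diverge is the diagonal. The paper proves uniformity of the main diagonal by a local, telescoping comparison: it writes out $H_{kk}$ and $H_{(k+1)(k+1)}$ term by term, shifts the second sum by Lemma~\ref{Toeplitz}, and observes that the two lists agree except for the single boundary pair $\smateq{k,N}$ versus $\smateq{k,0}$, which Lemma~\ref{piper2} identifies. You instead evaluate $H_{kk}$ globally, reindexing by $m=k-j$ so that the summands become $\smateq{m,0}$ over a complete set of nonzero residues modulo $N$, and then use the periodicity of Corollary~\ref{nzero} to fold the range into $\{1,\dots,N-1\}$; the resulting expression $\sum_{m=1}^{N-1}\smateq{m,0}$ is manifestly independent of $k$. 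Your residue-counting is sound: for $j=1,\dots,N$ the differences $k-j$ are $N$ consecutive integers, deleting $j=k$ deletes exactly the zero residue, and the wraparound for $j>k$ fills $\{k,\dots,N-1\}$ without collision against $\{1,\dots,k-1\}$. The trade-off is that your route yields a closed form for the common diagonal value (useful later, since $H_{11}$ appears explicitly in the eigenvalue formula), at the cost of the careful bijection check you flag; the paper's consecutive-entry comparison needs only the single identity $\smateq{k,0}=\smateq{k,N}$ and so sidesteps any counting, but establishes only equality of neighbors rather than an explicit value.
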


\begin{proof}
This proof will cover the different two cases for $H_{kl}$: $k \neq l$ and $k = l$.  

For $k\neq l$ we know that from Lemma~\ref{Toeplitz} $\smateq{i,j} = \smateq{i+k,j+k}$.  Therefore, $H_{kl}$ is equal to $H_{ij}$ where $i=k+m$ and $j=l+m$ for some integer $m$.  Since a left-to-right diagonal is described by stepping both indices by the same number, the diagonal entries in $\mathbf{H}$ are the same for $k \neq l$.

For $k = l$, we will show that an arbitrary entry, $H_{kk}$ on the diagonal is equal to the next entry down the diagonal, $H_{(k+1)(k+1)}$.  From Eq.~(\ref{keql}) we know that the central diagonal is of the form 
\begin{equation}
H_{kk} = \displaystyle\sum_{\jindex}^N\smateq{k,j}= \smateq{k,1}+\smateq{k,2}+...+\smateq{k,N-1}+\smateq{k,N}.
\label{eq16}
\end{equation}
where $k\neq{j}$.  Also, the next entry down the diagonal is similarly
\begin{eqnarray}
H_{(k+1)(k+1)} &= & \displaystyle\sum_{\jindex}^N\smateq{k+1,j} \\
&=& \smateq{k+1,1}+\smateq{k+1,2}+...+\smateq{k+1,N-1}+\smateq{k+1,N}\nonumber.
\end{eqnarray}
Using Lemma \ref{Toeplitz} we can add $-1$ to $j$ and $k$, giving 
\begin{eqnarray}
H_{(k+1)(k+1)} & =  & \displaystyle\sum_{j=1(j\neq k+1)}^N\smateq{k,j-1} \nonumber\\ 
&=& \smateq{k,0}+\smateq{k,1}+...+\smateq{k,N-2}+\smateq{k,N-1}.
\label{eq18}
\end{eqnarray}
All of the terms between $H_{kk}$ and $H_{(k+1)(k+1)}$ from (\ref{eq16}) and (\ref{eq18}) overlap except $\smateq{k,N}$ from  (\ref{eq16}) and  $\smateq{k,0}$ from (\ref{eq18}).  However from Lemma \ref{piper2},  $ \smateq{k,0}=\smateq{k,N}$. Therefore $H_{kk}=H_{(k+1)(k+1)}$, and all of the central diagonals are the same. \qed
\end{proof}

In order to show that the matrix is symmetrically circulant, we also need the following lemma:
\begin{lemma}
The first row of the matrix $\mathbf{H}$ has the symmetry such that $\smateq{1,k} = \smateq{1,N-k+2}$.
\label{Lemma2_1}
\end{lemma}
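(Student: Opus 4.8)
The plan is to show that both $\smateq{1,k}$ and $\smateq{1,N-k+2}$ collapse to the single expression $\smateq{1,2-k}$, after which the asserted equality is immediate. The reason to expect such a collapse is that the map $k \mapsto N-k+2$ is the composition of a reflection of the second index about the first (sending $1+m \mapsto 1-m$) with the $N$-periodicity of the second index; both of these operations are already recorded in the preceding corollaries, so the lemma should follow by bookkeeping alone rather than by any fresh trigonometric work.

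Concretely, I would first attack the right-hand side. By Corollary \ref{nzero}, the second index of $\smateq{}$ may be shifted by $\pm N$, so $\smateq{1,N-k+2}=\smateq{1,(N-k+2)-N}=\smateq{1,2-k}$. Next I would attack the left-hand side using Corollary \ref{for_backward}, which supplies the reflection $\smateq{i,i+m}=\smateq{i,i-m}$. Writing $k=1+(k-1)$ and taking $i=1$, $m=k-1$, this reads $\smateq{1,k}=\smateq{1,1+(k-1)}=\smateq{1,1-(k-1)}=\smateq{1,2-k}$. Comparing the two reductions then gives $\smateq{1,k}=\smateq{1,N-k+2}$.

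The one real subtlety, and the step I would watch most carefully, is that the intermediate indices $2-k$ and $N-k+2$ can fall outside the physical range $\{1,\dots,N\}$, so I must confirm that each manipulation is legitimate for out-of-range integer arguments. This is exactly what the earlier structural results guarantee: $\smateq{}$ is defined through the $2\pi$-periodic, even trigonometric functions of $i-j$, and Lemmas \ref{symmetric}, \ref{Toeplitz}, and \ref{piper2}, together with Corollaries \ref{for_backward} and \ref{nzero}, extend these identities to all integer index pairs. As a safeguard, I could bypass the index bookkeeping entirely by comparing angular arguments directly: $\smateq{1,k}$ carries the argument $-\tfrac{2\pi}{N}(k-1)$, whereas $\smateq{1,N-k+2}$ carries $\tfrac{2\pi}{N}(k-1)-2\pi$, and $2\pi$-periodicity combined with evenness identifies the two. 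I expect the lemma-based route to read more cleanly in the paper, with this direct-argument computation retained as a cross-check.
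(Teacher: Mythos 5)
Your proposal is correct and follows essentially the same route as the paper: both reduce $\smateq{1,N-k+2}$ to $\smateq{1,2-k}$ via Corollary \ref{nzero} and then identify this with $\smateq{1,k}$ using the reflection structure. The only cosmetic difference is that you invoke Corollary \ref{for_backward} for the second step, whereas the paper unpacks that same reflection explicitly as Lemma \ref{Toeplitz} followed by Lemma \ref{symmetric}.
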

\begin{proof}
By Corollary \ref{nzero}, $\smateq{1,N-k+2}$ is equivalent to $\smateq{1,-k+2}$.  Using Lemma \ref{Toeplitz} we can add $k-1$ to both sides, which yields $\smateq{k,1}$, which by Lemma \ref{symmetric} is equal to $\smateq{1,k}$.
Therefore $\smateq{1,k} = \smateq{1,N-k+2}$. \qed
\end{proof}
This property is demonstrated as the connection between the first particle and the $k^\mathrm{th}$ particle going forwards and backwards around the circle in Fig.~\ref{fig3}.

With this proof, we have shown that the dynamical matrix $\bf{H}$ is symmetrically circulant of the form given in Theorem~\ref{thm3}. $\qed$

We now use the results by Guti\'{e}rrez-Guti\'{e}rrez and extend those results to find the eigenvalues for a Laplacian symmetrically circulant matrix. The eigenvalues $\omega^2_h$ ($h=1,\ldots,N$) of a symmetrically circulant matrix are,\cite{gutierrez}
\begin{equation}
\omega^2_h=
\begin{cases} 
 H_{11}+2\displaystyle\sum_{r=2}^{\frac{N+1}{2}}  H_{r1}\cos\left(\frac{2\pi(h-1)(r-1)}{N}\right)  & \mbox{if } N\mbox{ is odd} \\ 
H_{11}+2\displaystyle\sum_{r=2}^{\frac{N}{2}}  H_{r1}\cos\left(\frac{2\pi(h-1)(r-1)}{N}\right)  + H_{\left(\frac{N}{2}+1\right)1}\cos(\pi(h-1)) & \mbox{if } N\mbox{ is even}.
\end{cases}
\label{eigenval1}
\end{equation}
These two cases can be simplified to a single case by recognizing that the sum is doubled.  Since the matrix has the form shown in Theorem~\ref{thm3}, the sums in Eq.~(\ref{eigenval1}) are equivalent to a single sum over the entire row.  There is no longer a distinction between the even and odd cases, giving a single sum: 
\begin{equation}
\omega^2_h=H_{11}+\displaystyle\sum_{r=2}^{N}  H_{r1}\cos\left(\frac{2\pi(h-1)(r-1)}{N}\right).
\label{eigenval2}
\end{equation}
We use our results for the explicit form of the dynamical matrix, Eqs.~(\ref{kneql}) and (\ref{keql}), and that $\mathbf{H}$ is symmetric,  and find that
\begin{equation}
\omega^2_h=\displaystyle\sum_{r=2}^N\smateq{1,r}-\displaystyle\sum_{r=2}^{N}  \smateq{1,r}\cos\left(\frac{2\pi(h-1)(r-1)}{N}\right).
\end{equation}
We then combine the two sums, since they sum over the same index and range.  We also note that the eigenfrequencies have the same structure as the first row of the matrix.  There are thus only $\lceil(N+1)/2\rceil$ unique eigenfrequencies, where the notation $\lceil \rceil$ denotes the ceiling function, meaning that the number is rounded up to the nearest integer.  The unique eigenfrequencies are thus
\begin{equation}
\omega_h^2=\displaystyle\sum_{r=2}^{N}\left[\smateq{1,r}\left(1-\cos\left(\frac{2\pi(h-1)(r-1)}{N}\right)\right)\right]\; \mathrm{for}\;h\leq \lceil(N+1)/2\rceil.
\label{eigenvaluesfinal}
\end{equation}

\section{Eigenfrequencies}

Finally, we evaluate the normal mode eigenfrequencies from Eq.~(\ref{eigenvaluesfinal}) for large numbers of particles.  Since there are approximately $N/2$ eigenvalues for $N$ particles, we evaluated the eigenfrequencies on a logarithmic scale, shown for the first 10,000 particles in Fig.~\ref{modfig}.  We have omitted the zero eigenfrequencies, as noted above, since that corresponds to the common rotation of all the particles around the circle.  There is a clear frequency gap between the second and third frequency which means that, experimentally, these will always be uniquely addressable.  However, spacing between the higher mode frequencies becomes smaller as the number of particles increases.

The highest frequency mode is also of interest and we show a linear fit to this mode in the figure.  The limiting behavior of this frequency is dominated by the denominator of the expression in (\ref{eigenvaluesfinal}).  This denominator can be rewritten as the absolute value of a sine cubed, giving rise to an $N^3$ dependence of the frequency.  The solid line is a fit to $\omega^2 = \xi N^3$ with a fit parameter of $\xi = 0.067838$.  Although the analytic evaluation of the highest mode frequency ($h\rightarrow N/2$ in Eq.~(\ref{eigenvaluesfinal})) is not trivial, is is possible to show that the fit parameter approaches 
\begin{equation}
\frac{14}{8\pi^3}\zeta(3)
\end{equation}
as $N\rightarrow\infty$ where $\zeta(3)$ is the Riemann Zeta function.

\begin{figure}[htbp]  
\includegraphics[width=6in,keepaspectratio]{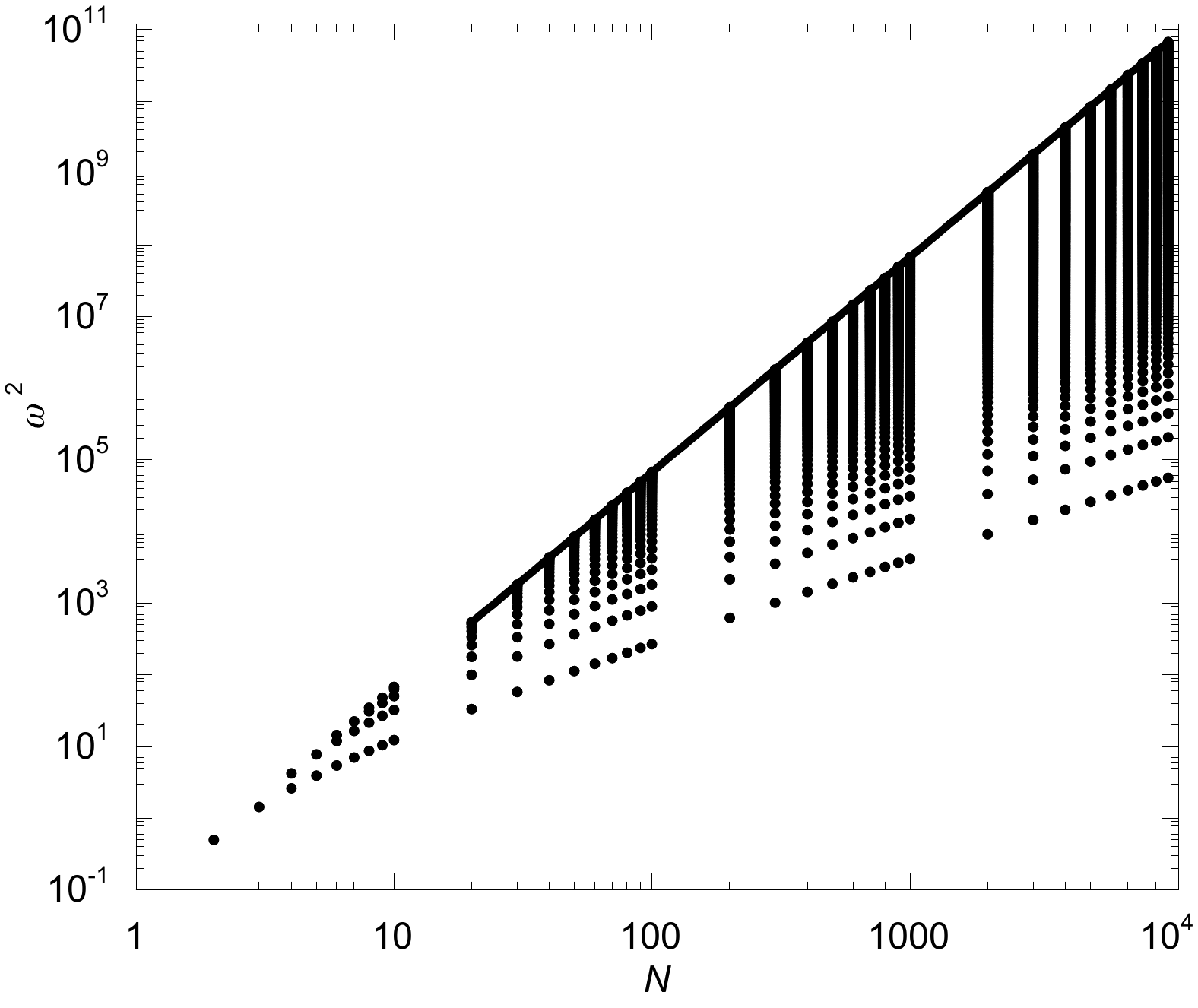}  
\caption{Eigenfrequencies for the first 10,000 particles, as calculated from Eq.~(\ref{eigenvaluesfinal}).  The solid line is a fit to $N^3$ for the highest frequency mode for large numbers of particles ($N \gg 1$).} 
\label{modfig}
\end{figure}

\section{Conclusion}
We have proven Theorem 1, for $N$ number of charged particles confined by a cylindrically symmetric potential to a circle.  Small perturbations about the equilibrium positions give rise to normal modes of oscillation which are, 
$$
\omega_h^2=\displaystyle\sum_{r=2}^N\left[\smateq{1,r}\left(1-\cos\left(\frac{2\pi(h-1)(r-1)}{N}\right)\right)\right] ,
$$
where $\smateq{i,j}$  is 
$$
\smateq{i,j}=\frac{3+\cos\thetadiffeq}{4\sqrt{2-2\cos\thetadiffeq}}\frac{1}{\sin^2\left(\frac{\pi}{N}(i-j)\right)}.
$$

\end{document}